\documentclass[pra,aps,twocolumn,showpacs]{revtex4}



\usepackage{amsmath,amsfonts,amssymb,caption,hyperref,color,epsfig,graphics,graphicx,latexsym,mathrsfs,revsymb,theorem,url,epstopdf}

\hypersetup{colorlinks,linkcolor={blue},citecolor={blue},urlcolor={red}}

\usepackage{hyperref}

\newlength{\halfpagewidth}

\setlength{\halfpagewidth}{\linewidth}

\divide\halfpagewidth by 2

\newtheorem{definition}{Definition}
\newtheorem{proposition}[definition]{Proposition}
\newtheorem{Lemma}[definition]{Lemma}

\newtheorem{Theorem}[definition]{Theorem}
\newtheorem{Corollary}[definition]{Corollary}
\newtheorem{conjecture}[definition]{Conjecture}

\newtheorem{remark}[definition]{Remark}
\newtheorem{example}[definition]{Example}
\newtheorem{question}[definition]{Question}


\def\squareforqed{\hbox{\rlap{$\sqcap$}$\sqcup$}}
\def\qed{\ifmmode\squareforqed\else{\unskip\nobreak\hfil
		\penalty50\hskip1em\null\nobreak\hfil\squareforqed
		\parfillskip=0pt\finalhyphendemerits=0\endgraf}\fi}
\def\endenv{\ifmmode\;\else{\unskip\nobreak\hfil
		\penalty50\hskip1em\null\nobreak\hfil\;
		\parfillskip=0pt\finalhyphendemerits=0\endgraf}\fi}
\newenvironment{proof}{\noindent \textbf{{Proof.~} }}{\qed}
\def\Dbar{\leavevmode\lower.6ex\hbox to 0pt
	{\hskip-.23ex\accent"16\hss}D}
\makeatletter
\def\url@leostyle{%
	\@ifundefined{selectfont}{\def\UrlFont{\sf}}{\def\UrlFont{\small\ttfamily}}}
\makeatother
\urlstyle{leo}

\def\bcj{\begin{conjecture}}
	\def\ecj{\end{conjecture}}
\def\bcr{\begin{corollary}}
	\def\ecr{\end{corollary}}
\def\bd{\begin{definition}}
	\def\ed{\end{definition}}
\def\bea{\begin{eqnarray}}
\def\eea{\end{eqnarray}}
\def\bem{\begin{enumerate}}
	\def\eem{\end{enumerate}}
\def\bex{\begin{example}}
	\def\eex{\end{example}}
\def\bim{\begin{itemize}}
	\def\eim{\end{itemize}}
\def\bl{\begin{lemma}}
	\def\el{\end{lemma}}
\def\bma{\begin{bmatrix}}
	\def\ema{\end{bmatrix}}
\def\bpf{\begin{proof}}
	\def\epf{\end{proof}}
\def\bpp{\begin{proposition}}
	\def\epp{\end{proposition}}
\def\bqu{\begin{question}}
	\def\equ{\end{question}}
\def\br{\begin{remark}}
	\def\er{\end{remark}}
\def\bt{\begin{theorem}}
	\def\et{\end{theorem}}

\def\btb{\begin{tabular}}
	\def\etb{\end{tabular}}

\newcommand{\nc}{\newcommand}



\nc{\bbA}{\mathbb{A}} \nc{\bbB}{\mathbb{B}} \nc{\bbC}{\mathbb{C}}
\nc{\bbD}{\mathbb{D}} \nc{\bbE}{\mathbb{E}} \nc{\bbF}{\mathbb{F}}
\nc{\bbG}{\mathbb{G}} \nc{\bbH}{\mathbb{H}} \nc{\bbI}{\mathbb{I}}
\nc{\bbJ}{\mathbb{J}} \nc{\bbK}{\mathbb{K}} \nc{\bbL}{\mathbb{L}}
\nc{\bbM}{\mathbb{M}} \nc{\bbN}{\mathbb{N}} \nc{\bbO}{\mathbb{O}}
\nc{\bbP}{\mathbb{P}} \nc{\bbQ}{\mathbb{Q}} \nc{\bbR}{\mathbb{R}}
\nc{\bbS}{\mathbb{S}} \nc{\bbT}{\mathbb{T}} \nc{\bbU}{\mathbb{U}}
\nc{\bbV}{\mathbb{V}} \nc{\bbW}{\mathbb{W}} \nc{\bbX}{\mathbb{X}}
\nc{\bbZ}{\mathbb{Z}}


\nc{\bA}{{\bf A}} \nc{\bB}{{\bf B}} \nc{\bC}{{\bf C}}
\nc{\bD}{{\bf D}} \nc{\bE}{{\bf E}} \nc{\bF}{{\bf F}}
\nc{\bG}{{\bf G}} \nc{\bH}{{\bf H}} \nc{\bI}{{\bf I}}
\nc{\bJ}{{\bf J}} \nc{\bK}{{\bf K}} \nc{\bL}{{\bf L}}
\nc{\bM}{{\bf M}} \nc{\bN}{{\bf N}} \nc{\bO}{{\bf O}}
\nc{\bP}{{\bf P}} \nc{\bQ}{{\bf Q}} \nc{\bR}{{\bf R}}
\nc{\bS}{{\bf S}} \nc{\bT}{{\bf T}} \nc{\bU}{{\bf U}}
\nc{\bV}{{\bf V}} \nc{\bW}{{\bf W}} \nc{\bX}{{\bf X}}
\nc{\bZ}{{\bf Z}}


\nc{\cA}{{\cal A}} \nc{\cB}{{\cal B}} \nc{\cC}{{\cal C}}
\nc{\cD}{{\cal D}} \nc{\cE}{{\cal E}} \nc{\cF}{{\cal F}}
\nc{\cG}{{\cal G}} \nc{\cH}{{\cal H}} \nc{\cI}{{\cal I}}
\nc{\cJ}{{\cal J}} \nc{\cK}{{\cal K}} \nc{\cL}{{\cal L}}
\nc{\cM}{{\cal M}} \nc{\cN}{{\cal N}} \nc{\cO}{{\cal O}}
\nc{\cP}{{\cal P}} \nc{\cQ}{{\cal Q}} \nc{\cR}{{\cal R}}
\nc{\cS}{{\cal S}} \nc{\cT}{{\cal T}} \nc{\cU}{{\cal U}}
\nc{\cV}{{\cal V}} \nc{\cW}{{\cal W}} \nc{\cX}{{\cal X}}
\nc{\cZ}{{\cal Z}}


\nc{\hA}{{\hat{A}}} \nc{\hB}{{\hat{B}}} \nc{\hC}{{\hat{C}}}
\nc{\hD}{{\hat{D}}} \nc{\hE}{{\hat{E}}} \nc{\hF}{{\hat{F}}}
\nc{\hG}{{\hat{G}}} \nc{\hH}{{\hat{H}}} \nc{\hI}{{\hat{I}}}
\nc{\hJ}{{\hat{J}}} \nc{\hK}{{\hat{K}}} \nc{\hL}{{\hat{L}}}
\nc{\hM}{{\hat{M}}} \nc{\hN}{{\hat{N}}} \nc{\hO}{{\hat{O}}}
\nc{\hP}{{\hat{P}}} \nc{\hR}{{\hat{R}}} \nc{\hS}{{\hat{S}}}
\nc{\hT}{{\hat{T}}} \nc{\hU}{{\hat{U}}} \nc{\hV}{{\hat{V}}}
\nc{\hW}{{\hat{W}}} \nc{\hX}{{\hat{X}}} \nc{\hZ}{{\hat{Z}}}

\nc{\hn}{{\hat{n}}}





























\def\max{\mathop{\rm max}}
\def\min{\mathop{\rm min}}


\def\rank{\mathop{\rm rank}}



\def\tr{\mathop{\rm Tr}}



\newcommand{\bra}[1]{\langle#1|}
\newcommand{\ket}[1]{|#1\rangle}

\newcommand{\norm}[1]{\lVert#1\rVert}


















\def\Dbar{\leavevmode\lower.6ex\hbox to 0pt
	{\hskip-.23ex\accent"16\hss}D}


\begin{document}
	\title{Entanglement polygon inequalities for pure states in qudit systems}
	
	\author{Xian Shi}\email[]
	{shixian01@gmail.com}
\affiliation{College of Information Science and Technology, Beijing University of Chemical Technology, Beijing 100029, China}
	
	%
	
	
	
	\date{\today}
	
	\pacs{03.65.Ud, 03.67.Mn}

\begin{abstract}
\indent Entanglement is one of the important resources in quantum tasks. Recently, Yang $et$ $al.$ [arXiv:2205.08801] proposed an entanglement polygon inequalities (EPI) in terms of some entanglement measures for $n$-qudit pure states. Here we continue to consider the entanglement polygon inequalities. Specifially, we show that the EPI is valid for $n$-qudit pure states in terms of geometrical entanglement measure (GEM), then we study the residual entanglement in terms of GEM for pure states in three-qubit systems. At last, we present counterexamples showing that the EPI is invalid for higher dimensional systems in terms of negativity, we also present a class of states beyond qubits satisfy the EPI in terms of negativity. 
\end{abstract}

\maketitle
\section{introduction}
\indent Quantum entanglement is an essential feature of quantum mechanics. It plays an important role in quantum informa- tion and quantum computation theory \cite{horodecki2009quantum}, such as superdense coding \cite{bennett1992communication}, teleportation \cite{bennett1993teleporting}, and the speedup of quantum algorithms \cite{shimoni2005entangled}.\\
\indent One of the important properties of multipartite entanglement is that entanglement cannot be freely shared. For a tripartite entangled state $\rho_{ABC},$ there exists six different bipartite entanglements $E_{A|B},E_{A|C},E_{B|C},$$E_{A|BC},E_{B|AC}$ and $E_{C|AB}$. Here $E$ is an entanglement measure for bipartite systems, $E_{A|BC}$ can be seen as a $one$ $to$ $group$ entanglement, and $E_{A|B}$ can be seen as a $one$ $to$ $one$ entanglement. In 2000, Coffman, Kundu and Wootters showed a famous inequality for a three qubit system $\mathcal{H}_A\otimes\mathcal{H}_B\otimes\mathcal{H}_C$ \cite{coffman2000distributed},  
\begin{align}
E_{A|BC}\ge E_{A|B}+E_{A|C},\label{moe}
\end{align}
 This inequality means that the entanglement between a singled out qubit to a group of qubits is bounded by the sum of the entanglement between the singled out qubit to each qubit in the group. Later, this inequality is generalized to multiqubit systems in terms of the squashed entanglent measure \cite{christandl2004squashed}, the squared entanglement of formation \cite{bai2014general}, the squared Tsallis-$q$ entanglement measure \cite{luo2016general}, and the squared Renyi entanglement measure \cite{song2016general}. As (\ref{moe}) is not valid for multipartite higher dimensional systems in terms of generic bipartite entanglement measures, other types of monogamy of entanglement were presented \cite{de2014monogamy,gour2018monogamy,shi2021multilinear}. \par
 Recently, entanglement polygon inequality (EPI) that is on the $one$ $to$ $group$ entanglement was proposed for multipartite entangled pure states $\otimes_i\mathcal{H}_i$ \cite{qian2018entanglement,yang2022entanglement},
 \begin{align}
 E_{j|\overline{j}}\le \sum_{k\ne j}E_{k|\overline{k}}.
 \end{align}
 This inequality was proved valid for pure states in multi-qubit systems in terms of bipartite entanglement measures \cite{qian2018entanglement} and in arbitrary dimensional systems in terms of the $q$-concurrence \cite{yang2021parametrized} when $q\ge 2$ and the unified-$(q,s)$ entangled measure when $q\ge 1, s\ge0$ \cite{yang2022entanglement}. However, whether the EPI in valid for pure states in higher dimensional systems in terms of concurrence or negativity is unknown.\par
This paper is organized as follows. In Sec.  \MakeUppercase{\romannumeral2}, we present the preliminary knowledge needed here, In Sec.  \MakeUppercase{\romannumeral3}, we present our main results, first we present that the EPI is valid in terms of GEM for pure states in arbitrary dimensional systems, we also consider the entanglement indicators based the EPI. At last, We present a class of pure states that doesnot satisfy the EPI in terms of negativity. Moreover, we present a class of pure states in higher dimensional systems satisfy the EPI  in terms of negativity, we also present that the EPI is valid in terms of concurrence for pure states in arbitrary dimensional systems. In Sec.  \MakeUppercase{\romannumeral4}, we end with a summary.
\section{PRELIMINARIES}
\indent An $n$-partite pure state $\ket{\psi}_{A_1A_2\cdots A_n}$ is full product if it can be written as
\begin{align}
\ket{\psi}_{A_1A_2\cdots A_n}=\ket{\phi_1}_{A_1}\ket{\phi_2}_{A_2}\cdots \ket{\phi_n}_{A_n},
\end{align}
 otherwise, it is entangled. A multipartite pure state is called
genuinely entangled if
\begin{align}
\ket{\psi}_{A_1A_2\cdots A_n}\ne \ket{\phi}_S\ket{\zeta}_{\overline{S}},
\end{align}
for any partite $S|\overline{S}$, here $S$ is a subset of $\boldsymbol{A}=\{A_1,A_2,\cdots, A_n\}$, and $\overline{S}=\boldsymbol{A}-S$.\par
Next we recall some entanglement measures for a bipartite state $\rho_{AB}.$ A bipartite pure state $\ket{\psi}_{AB}$ can be always written as 
\begin{align*}
\ket{\psi}_{AB}=\sum_{i=0}^{d-1}\sqrt{\lambda_i}\ket{i}_A\ket{i}_B,
\end{align*}
where $\lambda_i\ge \lambda_{i+1}\ge 0,$ $i=0,1,\cdots,d-2,$ and $\sum_i\lambda_i=1$.  The geometric entanglement measure for a pure state $\ket{\phi}_{AB}=\sum_i\sqrt{\lambda_i}\ket{ii}$ is defined as
\begin{align}
G(\ket{\phi}):=&1-\max_{\ket{\psi}=\ket{a_1}\ket{a_2}}\bra{\psi}\rho\ket{\psi}\label{gd}
\end{align}
The geometric entanglement measure for a mixed state $\rho_{AB}$ is generally defined as \cite{wei2003geometric},
\begin{align}
G(\rho_{AB})=\min\sum_ip_i G(\ket{\phi_i}),
\end{align}
where the minimum takes over all the decompositions of $\rho_{AB}$. \par
Negativity is an important entanglement measure for bipartite systems. It is defined as follows for a pure state $\ket{\phi}_{AB}$ \cite{vidal2002computable}, 
\begin{align}
N(\ket{\phi}_{AB})=\frac{\norm{\ket{\psi}\bra{\psi}^{T_A}}_1-1}{2},\label{nd}
\end{align}
here $\bra{ij}\rho_{AB}^{T_A}\ket{kl}=\bra{kj}\rho_{AB}\ket{il}$.\par
Concurrence is the other entanglement measure for bipartite mixed states $\rho_{AB}$. The concurrence of a pure state $\ket{\psi}_{AB}$ is defined as 
\begin{align}
C(\ket{\psi}_{AB})=\sqrt{2(1-\tr\rho_A^2)}.\label{c}
\end{align}
Here $\rho_A=\tr_B\ket{\psi}_{AB}\bra{\psi}$. For a mixed state $\rho_{AB},$ it is defined as
\begin{align}\label{Cm}
C(\rho_{AB})=\min_{\{p_i,\ket{\phi_i}_{AB}\}}\sum_i p_i C(\ket{\phi_i}_{AB}),
\end{align}
where the minimum takes over all the decompositions of $\rho_{AB}=\sum_i p_i\ket{\phi_i}_{AB}\bra{\phi_i}$ with $p_i\ge 0$ and $\sum p_i=1.$\\
\indent For a two-qubit mixed state $\rho_{AB},$ Wootters derived an analytical formula \cite{wootters1998entanglement}:
\begin{align}
C_{AB}=&\max\{\sqrt{\mu_1}-\sqrt{\mu_2}-\sqrt{\mu_3}-\sqrt{\mu_4},0\},\label{C2}
\end{align}
where $\mu_1,\mu_2,\mu_3,$ and $\mu_4,$ are the eigenvalues of the matrix $\rho_{AB}(\sigma_y\otimes\sigma_y)\rho_{AB}^{*}(\sigma_y\otimes\sigma_y)$ with nonincreasing order.\par
Recently, a parametrized entanglement measure, the $q$-concurrence, was proposed in \cite{yang2021parametrized}. The $q$-concurrence for a pure state $\ket{\psi}_{AB}$ is defined as 
\begin{align}
C_q(\ket{\psi}_{AB})=1-Tr\rho_A^q. \label{qc}
\end{align}\par
In \cite{yang2022entanglement}, the authors showed that the $q$-concurrence satisfies the EPI for pure states when $q\ge 2$.
\begin{Lemma}\cite{yang2022entanglement}\label{l0}
	For any $n$-qudit pure entangled state $\ket{\psi}$ in $\otimes_i\mathcal{H}_i.$ When $q\ge 2$, the following inequality holds,
	\begin{align}
	C_q^{j|\overline{j}}\le\sum_{k\ne j,\forall k}C_q^{k|\overline{k}}(\ket{\psi}),
	\end{align}
\end{Lemma}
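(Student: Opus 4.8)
The plan is to express every term through its reduced spectrum, $C_q^{k|\overline{k}} = 1 - \tr\rho_k^q$ with $\rho_k = \tr_{\overline{k}}\proj{\psi}$, and to collapse the $n$-partite inequality onto a single bipartite estimate by telescoping. Relabel so $j = 1$, and for $2 \le m \le n$ put $\tau_m = \tr_{\{1,\dots,m-1\}}\proj{\psi}$, the reduction onto parties $m, m+1, \dots, n$ (so $\tau_n = \rho_n$). Since $\ket{\psi}$ is globally pure, any block and its complement share the same nonzero Schmidt spectrum, giving $\tr\rho_S^q = \tr\rho_{\overline{S}}^q$; in particular, writing $g_m := 1 - \tr\tau_m^q$, one has $g_2 = 1 - \tr\rho_1^q = C_q^{1|\overline{1}}$ and $g_n = 1 - \tr\rho_n^q = C_q^{n|\overline{n}}$.

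The only genuine ingredient is the bipartite subadditivity estimate: for every bipartite density operator $\sigma_{BC}$ and every $q \ge 2$,
\begin{align}
\tr\sigma_B^q + \tr\sigma_C^q \le 1 + \tr\sigma_{BC}^q. \label{key}
\end{align}
Granting (\ref{key}), I would apply it to $\sigma_{BC} = \tau_m$ with $B$ the single party $m$ and $C$ the block $\{m+1,\dots,n\}$, whose reduction has the spectrum of $\tau_{m+1}$; rearranging gives the recursion $g_m \le C_q^{m|\overline{m}} + g_{m+1}$. Iterating this for $m = 2, \dots, n-1$ and terminating at $g_n = C_q^{n|\overline{n}}$ yields $C_q^{1|\overline{1}} = g_2 \le \sum_{m=2}^{n} C_q^{m|\overline{m}}$, which is exactly the asserted inequality.

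Everything therefore reduces to (\ref{key}), which is the subadditivity $T_q(\sigma_{BC}) \le T_q(\sigma_B) + T_q(\sigma_C)$ of the Tsallis-$q$ entropy $T_q = (1 - \tr\sigma^q)/(q-1)$. For $q = 2$ I would prove it by a swap trick on a doubled system: with $W = \sigma_{BC} \otimes \sigma_{B'C'} \ge 0$ and $F_B$, $F_C$ the swaps of the two $B$-copies and the two $C$-copies, the identity $\tr X^2 = \tr[(X \otimes X)F]$ turns the four terms of (\ref{key}) into $\tr[W F_B]$, $\tr[W F_C]$, $\tr[W] = 1$, and $\tr[W F_B F_C]$. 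The needed bound $\tr[W(F_B + F_C - I - F_B F_C)] \le 0$ then follows from the identity $F_B + F_C - I - F_B F_C = -(I - F_B)(I - F_C)$, since $I - F_B$ and $I - F_C$ are commuting positive operators, so their product is positive and $\tr[W(I - F_B)(I - F_C)] \ge 0$.

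The main obstacle is upgrading (\ref{key}) from $q = 2$ to all real $q \ge 2$. The swap computation extends to integer $q$ through cyclic permutation operators on $q$ copies, but the clean factorization is lost, and for non-integer $q$ the permutation picture disappears entirely. There I would instead use an operator-convexity or integral-representation argument for $x \mapsto x^q$ — this is precisely where the hypothesis $q \ge 2$ is consumed, as the corresponding convexity estimates fail to produce the right sign for $1 < q < 2$ — or else invoke the known subadditivity of the Tsallis-$q$ entropies. The spectral reformulation and the telescoping are then entirely routine.
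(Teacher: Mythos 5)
First, a point of orientation: the paper never proves this Lemma at all --- it is imported wholesale from \cite{yang2022entanglement} --- so there is no in-paper proof to measure you against. The closest analogue inside the paper is the proof of Theorem \ref{th1}, which combines Audenaert's norm subadditivity (\ref{sq}) with an (implicit) telescoping over parties in the limit $p\to\infty$; your proposal is exactly that strategy executed at finite $q$ in trace form. Your reduction is correct: purity gives $g_2=C_q^{1|\overline{1}}$ and $g_n=C_q^{n|\overline{n}}$, and the bipartite estimate $\tr\sigma_B^q+\tr\sigma_C^q\le 1+\tr\sigma_{BC}^q$ applied to each $\tau_m$ (with $B$ the single party $m$, $C$ the rest) yields $g_m\le C_q^{m|\overline{m}}+g_{m+1}$, which telescopes to the claim. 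Your swap-operator proof at $q=2$ is also sound: $I-F_B$ and $I-F_C$ are commuting positive semidefinite operators, so their product is positive semidefinite and $\tr[W(I-F_B)(I-F_C)]\ge 0$.

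The only gap is the one you flag yourself: for general real $q\ge 2$ you do not prove the key bipartite estimate, you defer to ``known subadditivity of the Tsallis-$q$ entropies.'' That deferral is legitimate (it is Audenaert's theorem, valid for all $q>1$), but it can in fact be discharged with the very lemma the paper already cites: inequality (\ref{sq}) implies your trace inequality by an elementary extreme-point argument. Write $x=\norm{\sigma_B}_q$, $y=\norm{\sigma_C}_q$, $z=\norm{\sigma_{BC}}_q\in(0,1]$, so that (\ref{sq}) reads $x+y\le 1+z$, and the goal is $x^q+y^q\le 1+z^q$. If $x+y\le 1$, then $x^q+y^q\le (x+y)^q\le 1\le 1+z^q$. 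If $s:=x+y>1$, then on the segment $\{x+y=s,\ 0\le x,y\le 1\}$ the convex function $x^q+y^q$ is maximized at the endpoint $(1,s-1)$, so $x^q+y^q\le 1+(s-1)^q\le 1+z^q$, using $z\ge s-1$ and monotonicity of $t\mapsto t^q$. With this bridge supplied, your argument is complete; note that it then proves strictly more than the Lemma asserts, since nothing in it uses $q\ge 2$ (only $q>1$) nor the assumption that $\ket{\psi}$ is entangled.
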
	
\section{Main Results}
\indent In this section, we first recall the EPI proposed in \cite{,qian2018entanglement,yang2022entanglement}, then we present that the geometric entanglement measure (GEM) also satisfies the EPI for three qudit pure states, and we also present an application based on the EPI in terms of GEM.  At last, we present a class of  counterexamples which shows that negativity is invalid for three qudit pure states, we also present a class of examples that satisfies the EPI in terms of negativity.
\subsection{EPI IN TERMS OF GEOMETRIC ENTANGLEMENT MEASURE}
\indent  Assume $\ket{\psi}_{AB}=\sum_{i=0}^{d-1}\sqrt{\lambda_i}\ket{ii}$ is a pure state with $\lambda_i\ge \lambda_{i+1}\ge 0,$ $i=0,1,\cdots,d-2,$ and $\sum_i\lambda_i=1$. By the formula (\ref{gd}), we have 
\begin{align}
G(\ket{\psi})=1-\lambda_0, \label{gm}
\end{align}
Next we recall the Schatten $p$-norm of a bounded operator $M$ on the Hilber space $\mathcal{H}$ with finite dimensions,
\begin{align}
\norm{M}_p:=&[\tr(|M|^p)]^{\frac{1}{p}},\nonumber\\
=&(\sum_{i=0}^{d-1}s_i^p(M))^{\frac{1}{p}}.
\end{align}
here $s_i(M)$ are the sigular values of $M$ with decreasing order, $i.$$e.$ the eigenvalues of $\sqrt{M^{\dagger}M}$.  And when $p\rightarrow\infty,$ $\norm{M}_p\rightarrow s_0(M).$ Next we recall the following lemma.
\begin{Lemma}\cite{audenaert2007subadditivity}
	For any bipartite state $\rho$ on a Hilbert space $\mathcal{H}_1\otimes\mathcal{H}_2$, the inequality 
	\begin{align}
	1+\norm{\rho}_q\ge \norm{Tr_1\rho}_q+\norm{Tr_2\rho}_1 \label{sq}
	\end{align}
	holds for $q>1$.
\end{Lemma}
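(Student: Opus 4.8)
The plan is to prove the inequality in the genuinely informative form $\norm{\tr_1\rho}_q+\norm{\tr_2\rho}_q\le 1+\norm{\rho}_q$ for $q>1$, by dualizing the two reduced‑state norms and collapsing everything to a single scalar estimate. Let $p$ be the Hölder conjugate of $q$, so $1/p+1/q=1$ and $p>1$, and write $\rho_1:=\tr_2\rho$ (acting on $\mathcal{H}_1$) and $\rho_2:=\tr_1\rho$ (acting on $\mathcal{H}_2$). The variational formula for the Schatten norm of a positive operator gives $\norm{\rho_1}_q=\tr(\rho_1 X)$ and $\norm{\rho_2}_q=\tr(\rho_2 Y)$ for the normalized positive witnesses $X=\rho_1^{q-1}/\norm{\rho_1}_q^{q-1}\ge 0$ and $Y=\rho_2^{q-1}/\norm{\rho_2}_q^{q-1}\ge 0$, which satisfy $\norm{X}_p=\norm{Y}_p=1$. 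Lifting back to the joint space through the adjoint of the partial trace turns the left‑hand side into one linear functional of $\rho$, namely $\norm{\rho_1}_q+\norm{\rho_2}_q=\tr\!\big(\rho\,(X\ox I_2+I_1\ox Y)\big)$.

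Since $\tr\rho=1$, it then suffices to show $\tr(\rho W)\le\norm{\rho}_q$ for $W:=X\ox I_2+I_1\ox Y-I$. First I would discard the negative part: because $\rho\ge0$ and $W\le W_+$, we get $\tr(\rho W)\le\tr(\rho W_+)\le\norm{\rho}_q\,\norm{W_+}_p$ by the trace Hölder inequality, so the entire statement reduces to the single bound $\norm{W_+}_p\le1$. Here $X\ox I_2$ and $I_1\ox Y$ commute, so $X\ox I_2+I_1\ox Y$ is diagonal in the product eigenbasis with eigenvalues $x_i+y_j$, where $\{x_i\},\{y_j\}$ are the eigenvalues of $X,Y$; hence $W_+$ has eigenvalues $(x_i+y_j-1)_+$. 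Note that $\norm{X}_p\le1$ forces $x_i\le1$ for every $i$, and likewise $y_j\le1$.

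The crux is the elementary pointwise inequality $(x+y-1)_+\le xy$ for $x,y\in[0,1]$: if $x+y\le1$ the left side is $0$, and otherwise $xy-(x+y-1)=(1-x)(1-y)\ge0$. Applying this eigenvalue by eigenvalue and using multiplicativity of the $\ell^p$ norm over the product index set yields $\norm{W_+}_p^p=\sum_{i,j}\big((x_i+y_j-1)_+\big)^p\le\sum_{i,j}(x_iy_j)^p=\norm{X}_p^p\,\norm{Y}_p^p\le1$, which gives $\norm{W_+}_p\le1$ and closes the argument. I expect the main obstacle to be exactly this step: a naive Hölder bound on $\tr(\rho W)$ through $\norm{W}_p$ fails, because the eigenvalue $-1$ of $W$ contributes $1$ to $\norm{W}_p$ and can push it above $1$; the two essential moves are (i) retaining only the positive part $W_+$, which is legitimate since $\rho\ge0$, and (ii) finding the right majorizing product bound $(x+y-1)_+\le xy$ that decouples the two marginals. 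Finally, the operator‑norm case actually used for the geometric measure follows on letting $q\to\infty$, so that $\norm{\cdot}_q\to s_0(\cdot)$: the same reduction then simplifies, with $X,Y$ becoming rank‑one projectors onto the top eigenvectors and the scalar bound upgrading to the operator inequality $X\ox I_2+I_1\ox Y\le I+X\ox Y$.
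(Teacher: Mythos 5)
Your proof is correct and complete: the witnesses you build from the marginals indeed satisfy $\norm{X}_p=\norm{Y}_p=1$; dropping the negative part of $W=X\otimes I_2+I_1\otimes Y-I$ is legitimate since $\rho\ge 0$; and the pointwise bound $(x+y-1)_+\le xy$ on $[0,1]^2$, combined with H\"older and multiplicativity of Schatten norms across tensor factors, gives $\norm{W_+}_p\le 1$ and hence the claim. Note, however, that the paper offers no proof of this lemma to compare against --- it is imported with a citation to Audenaert --- and your argument is in substance Audenaert's original one: your eigenvalue inequality is precisely the spectral form of his operator inequality $X\otimes I_2+I_1\otimes Y-I\le X\otimes Y$, equivalently $(I_1-X)\otimes(I_2-Y)\ge 0$, a reformulation you yourself record in your closing sentence. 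One genuine improvement over the text: you proved the inequality with $\norm{\tr_2\rho}_q$ on the right, silently repairing the paper's misprint $\norm{\tr_2\rho}_1$; as printed, the lemma degenerates (because $\norm{\tr_2\rho}_1=1$ for a state) to $\norm{\rho}_q\ge\norm{\tr_1\rho}_q$, which is false in general (take $\rho=\sigma\otimes I/d$ with $\sigma$ mixed), and it is your corrected symmetric form that the paper actually uses to obtain its inequality (\ref{sq1}) and Theorem \ref{th1}.
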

Clearly, the inequality $(\ref{sq})$ can be written as 
\begin{align}
\sum_{i=1}^21-\norm{Tr_i\rho}_q\ge1-\norm{\rho}_q,\label{sq1}\hspace{5mm}\textit{when $q>1.$}
\end{align}
 Then we have the following result.
\begin{Theorem}\label{th1}
Assume $\ket{\psi}_{A_1A_2\cdots A_n}$ is a pure state in the following system $\otimes_{i=1}^n\mathcal{H}_i$, then we have the following inequality,
\begin{align}
G(\ket{\psi}_{P_i|\overline{P_i}})\le \sum_{j\ne i}G(\ket{\psi}_{P_j|\overline{P_j}}),
\end{align}
here $\{P_1,P_2,\cdots,P_k\}$ is a partition of the set $\{A_1,A_2,\cdots,A_n\}$ with $k\le n$.
\end{Theorem}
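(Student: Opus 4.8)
The plan is to express the geometric entanglement measure through the largest eigenvalue of a reduced density matrix, then feed Audenaert's Schatten-norm subadditivity bound (\ref{sq1}) into a telescoping argument and pass to the limit $q\to\infty$ at the end.

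First, by (\ref{gm}), for any bipartition $P\mid\overline{P}$ the geometric measure of the pure state is $G(\ket{\psi}_{P|\overline{P}})=1-s_0(\rho_P)$, where $\rho_P=\tr_{\overline{P}}\proj{\psi}$ and $s_0(\rho_P)$ denotes its largest eigenvalue. Since $\ket{\psi}$ is pure, $\rho_P$ and $\rho_{\overline{P}}$ have the same nonzero spectrum, so $s_0(\rho_P)=s_0(\rho_{\overline{P}})$ and in fact $\norm{\rho_P}_q=\norm{\rho_{\overline{P}}}_q$ for every $q$. Fixing the distinguished block and relabelling so that $i=1$, substituting $G=1-s_0$ and rearranging shows that the claimed inequality is equivalent to
\begin{align}
\sum_{j=2}^{k}s_0(\rho_{P_j})\le (k-2)+s_0(\rho_{\overline{P_1}}).\label{target}
\end{align}

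Next I would establish (\ref{target}) by a telescoping use of (\ref{sq}). Apply it to the reduced state $\rho_{\overline{P_1}}=\rho_{P_2P_3\cdots P_k}$, viewed as a bipartite operator whose first tensor factor is $\mathcal{H}_{P_2}$ and whose second factor is $\mathcal{H}_{P_3}\otimes\cdots\otimes\mathcal{H}_{P_k}$; its two partial traces are $\rho_{P_3\cdots P_k}$ and $\rho_{P_2}$, so (\ref{sq}) gives $1+\norm{\rho_{P_2\cdots P_k}}_q\ge\norm{\rho_{P_2}}_q+\norm{\rho_{P_3\cdots P_k}}_q$. Repeating the same splitting on $\rho_{P_3\cdots P_k}$, then on $\rho_{P_4\cdots P_k}$, and so on down to $\rho_{P_{k-1}P_k}$ — a total of $k-2$ applications — and chaining the resulting bounds yields
\begin{align}
\norm{\rho_{\overline{P_1}}}_q+(k-2)\ge\sum_{j=2}^{k}\norm{\rho_{P_j}}_q,\qquad q>1.\label{chain}
\end{align}

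Finally, since $\norm{M}_q\to s_0(M)$ as $q\to\infty$, taking this limit in (\ref{chain}) reproduces exactly (\ref{target}); undoing the substitution $G=1-s_0$ and using $s_0(\rho_{\overline{P_1}})=s_0(\rho_{P_1})$ then recovers the theorem. The $k=2$ case is the trivial equality $G(\ket{\psi}_{P_1|P_2})=G(\ket{\psi}_{P_2|P_1})$, consistent with the empty telescoping (no application of (\ref{sq})). I expect the only delicate point to be the passage to the limit: one must check that the family of bounds, valid for every finite $q>1$, survives at $q=\infty$. This is immediate because each $\norm{\cdot}_q$ converges to the corresponding $s_0$ and a weak inequality is preserved under limits. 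The telescoping itself is routine once the two-party bound is in hand, so the whole argument really hinges on correctly identifying the geometric measure with the $q\to\infty$ endpoint of the Schatten-norm family.
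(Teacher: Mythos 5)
Your proof is correct and follows essentially the same route as the paper's: both identify $G(\ket{\psi}_{P|\overline{P}})$ with the $q\to\infty$ endpoint $1-\lim_q\norm{\rho_P}_q$ of the Schatten-norm family, exploit the equality of the spectra of $\rho_{P_i}$ and $\rho_{\overline{P_i}}$ for a pure state, and feed Audenaert's subadditivity bound (\ref{sq}) into the limit. The only difference is presentational: you make the $(k-2)$-fold telescoping and the final passage to $q=\infty$ explicit, steps the paper compresses into a single citation of (\ref{sq1}) inside its chain of equalities.
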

\begin{proof}As
\begin{align}
G(\ket{\psi}_{AB})=1-s_0(\rho_A)=\lim_{p\rightarrow\infty}[1-\norm{\rho_A}_p], \label{ge}
\end{align}
then we have
\begin{align}
G(\ket{\psi}_{P_i|\overline{P_i}})=&\lim_{p\rightarrow\infty}[1-\norm{\rho_{P_i}}_p]\nonumber\\
=&\lim_{p\rightarrow\infty}[1-\norm{\rho_{\cup_{j\ne i} P_j}}_p]\nonumber\\
\le&\lim_{p\rightarrow\infty}\sum_{j\ne i} [1-\norm{\rho_{P_j}}_q]\nonumber\\
=&\sum_{j\ne i}G(\ket{\psi}_{P_j|\overline{P_j}}).
\end{align}
Here the first equality is due to $(\ref{ge}),$ the first inequality is due to the inequality $(\ref{sq1})$, and the last inequality is due to the equality $(\ref{ge}).$
\end{proof}\par
Next we propose a generalized EPI for $n$-qudit pure states in terms of the GEM.
\begin{Lemma}\label{l1}
	When $a,b,c\in (0,1]$, and $a+b\ge c$, then 
	\begin{align}
	a^{\alpha}+b^{\alpha}\ge c^{\alpha},\nonumber
	\end{align}
	when $\alpha\in (0,1]$.
\end{Lemma}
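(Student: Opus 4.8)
The plan is to separate the claim into two elementary facts about the map $x\mapsto x^\alpha$ on $[0,\infty)$ for a fixed exponent $\alpha\in(0,1]$: that it is monotonically increasing, and that it is subadditive. Granting these, the result follows in one line. First, since $a+b\ge c\ge 0$ and $x\mapsto x^\alpha$ is increasing, we get $c^\alpha\le (a+b)^\alpha$. Second, subadditivity gives $(a+b)^\alpha\le a^\alpha+b^\alpha$. Chaining the two inequalities yields $c^\alpha\le a^\alpha+b^\alpha$, which is exactly the assertion. The hypotheses $a,b\in(0,1]$ enter only to guarantee $a+b>0$, so that the normalization below is legitimate; the bound $c\le 1$ is not actually needed for the inequality itself.

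The only substantive point is therefore the subadditivity $(a+b)^\alpha\le a^\alpha+b^\alpha$ for $a,b\ge 0$, which I would prove by normalizing. Since $a+b>0$, set $t=a/(a+b)\in[0,1]$, so that $b/(a+b)=1-t$. Dividing the desired inequality by $(a+b)^\alpha$ reduces it to
\begin{align}
t^\alpha+(1-t)^\alpha\ge 1,\qquad t\in[0,1].\nonumber
\end{align}
This is the crux of the argument, and it in turn rests on the pointwise estimate $x^\alpha\ge x$, valid for all $x\in[0,1]$ when $\alpha\in(0,1]$.

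I would justify that estimate as follows: for $0<x\le 1$ one has $x^\alpha/x=x^{\alpha-1}\ge 1$, since the exponent $\alpha-1$ is nonpositive and the base satisfies $0<x\le 1$; for $x=0$ the inequality $x^\alpha\ge x$ is trivial. Applying this to $x=t$ and to $x=1-t$ and adding gives $t^\alpha+(1-t)^\alpha\ge t+(1-t)=1$, which establishes the reduced inequality and hence subadditivity. The main (indeed only) nontrivial ingredient is the subadditivity step, but it collapses to the one-variable inequality $x^\alpha\ge x$, so I do not expect a real obstacle; the single point requiring care is the degenerate case of the normalization, which is precisely why recording $a+b>0$ from the hypothesis $a,b\in(0,1]$ matters. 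An alternative route to the reduced inequality is to observe that $t\mapsto t^\alpha+(1-t)^\alpha$ is concave on $[0,1]$ and equals $1$ at both endpoints, hence lies above its chord; I would keep the $x^\alpha\ge x$ argument as the primary one since it avoids invoking concavity.
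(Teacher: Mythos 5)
Your proof is correct and follows essentially the same route as the paper: both reduce the claim to the chain $a^{\alpha}+b^{\alpha}\ge (a+b)^{\alpha}\ge c^{\alpha}$, i.e., subadditivity of $x\mapsto x^{\alpha}$ combined with monotonicity. The only difference is in how subadditivity is justified --- the paper assumes $a\ge b$, divides by $b^{\alpha}$, and asserts without proof that $(\frac{a}{b}+1)^{\alpha}-(\frac{a}{b})^{\alpha}\le 1$, whereas you normalize by $(a+b)^{\alpha}$ and derive $t^{\alpha}+(1-t)^{\alpha}\ge 1$ from the elementary bound $x^{\alpha}\ge x$ on $[0,1]$, so your write-up is in fact the more self-contained of the two.
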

\begin{proof}
Here we can always assume $a\ge b$, and when $0<\alpha\le 1,$ $(\frac{a}{b}+1)^{\alpha}-(\frac{a}{b})^{\alpha}\le 1,$ then we have
\begin{align*}
a^{\alpha}+b^{\alpha}\ge (a+b)^{\alpha}\ge c^{\alpha}.
\end{align*}
\end{proof}
\begin{Corollary}\label{gepi}
	Assume $\ket{\psi}_{A_1A_2\cdots A_n}$ is a pure state in the following system $\otimes_{i=1}^n\mathcal{H}_i$, then we have the following inequality,
	\begin{align}
	G(\ket{\psi}_{P_i|\overline{P_i}})^{\alpha}\le \sum_{j\ne i}G(\ket{\psi}_{P_j|\overline{P_j}})^{\alpha},\label{gm2}
	\end{align}
	here $\{P_1,P_2,\cdots,P_k\}$ is a partition of the set $\{A_1,A_2,\cdots,A_n\}$ with $k\le n$, and $\alpha\in (0,1]$.
\end{Corollary}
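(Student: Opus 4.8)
The plan is to deduce the $\alpha$-power inequality (\ref{gm2}) from the linear inequality of Theorem \ref{th1} by combining two elementary properties of the map $t\mapsto t^{\alpha}$ on $[0,\infty)$ for $\alpha\in(0,1]$: monotonicity and subadditivity. Write $c:=G(\ket{\psi}_{P_i|\overline{P_i}})$ and $a_j:=G(\ket{\psi}_{P_j|\overline{P_j}})$ for $j\ne i$; by (\ref{gm}) each of these lies in $[0,1)$, and Theorem \ref{th1} gives $\sum_{j\ne i}a_j\ge c$. The target (\ref{gm2}) is then exactly $\sum_{j\ne i}a_j^{\alpha}\ge c^{\alpha}$.

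First I would establish the multi-term refinement of Lemma \ref{l1}, namely that $\sum_{j\ne i}a_j^{\alpha}\ge\bigl(\sum_{j\ne i}a_j\bigr)^{\alpha}$ for nonnegative reals. The two-term case $a^{\alpha}+b^{\alpha}\ge(a+b)^{\alpha}$ already appears inside the proof of Lemma \ref{l1}, and it holds for all $a,b\ge0$ (it is the statement that $t\mapsto t^{\alpha}$ is subadditive, which follows from concavity together with $0^{\alpha}=0$). Iterating this two-term bound over the $k-1$ summands by induction yields the multi-term inequality; note that the partial sums may exceed $1$, but this causes no difficulty since subadditivity is valid on all of $[0,\infty)$. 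Finally, since $t\mapsto t^{\alpha}$ is increasing and $\sum_{j\ne i}a_j\ge c$ by Theorem \ref{th1}, I conclude $\sum_{j\ne i}a_j^{\alpha}\ge\bigl(\sum_{j\ne i}a_j\bigr)^{\alpha}\ge c^{\alpha}$, which is (\ref{gm2}).

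The only points needing care are bookkeeping rather than genuine obstacles. The statement of Lemma \ref{l1} restricts the variables to $(0,1]$, whereas a GEM value is $0$ precisely when the state factorizes across that cut; I would handle this by discarding any vanishing $a_j$ (each contributes $0^{\alpha}=0$ to the right-hand side) and by noting that if $c=0$ the inequality is trivial. Thus the main work is simply the inductive extension of the subadditivity step from two terms to $k-1$ terms, after which Theorem \ref{th1} and monotonicity finish the argument.
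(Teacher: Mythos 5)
Your proof is correct and follows essentially the same route as the paper, which simply cites Theorem \ref{th1} together with Lemma \ref{l1}. The only difference is that you make explicit what the paper leaves implicit --- the inductive extension of the two-term subadditivity $a^{\alpha}+b^{\alpha}\ge(a+b)^{\alpha}$ to $k-1$ terms and the handling of vanishing GEM values outside the range $(0,1]$ stated in Lemma \ref{l1} --- which is careful bookkeeping rather than a new idea.
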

The above corollary is due to the Theorem \ref{th1} and Lemma \ref{l1}. Then we present an example for the above corollary.
\begin{example}
\begin{align*}
\ket{\psi}_{ABC}=\frac{3}{5}\ket{102}+\frac{2\sqrt{2}}{5}\ket{200}+\frac{2}{5}\ket{010}+\frac{\sqrt{2}}{5}\ket{020}+\frac{\sqrt{2}}{5}\ket{001},
\end{align*}
through computation, we have 
\begin{align}
G(\ket{\psi}_{A|BC})=\frac{9}{25}, G(\ket{\psi}_{B|AC})=\frac{19}{25}, G(\ket{\psi}_{C|AB})=\frac{14}{25},
\end{align}
then the inequality $(\ref{gm2})$ can be written as
\begin{align}
g=&G(\ket{\psi}_{A|BC})^{\alpha}+G(\ket{\psi}_{C|BA})^{\alpha}-G(\ket{\psi}_{B|AC})^{\alpha}\nonumber\\
=&(\frac{9}{25})^{\alpha}+(\frac{14}{25})^{\alpha}-(\frac{19}{25})^{\alpha},\label{e1}
\end{align}
we plot $(\ref{e1})$ in Fig. \ref{fig1}, from the figure, we see that the inequality $(\ref{e1})$ is bigger than 0 when $\alpha\in (0,1).$
\begin{figure}
	\centering
	\includegraphics[width=90mm]{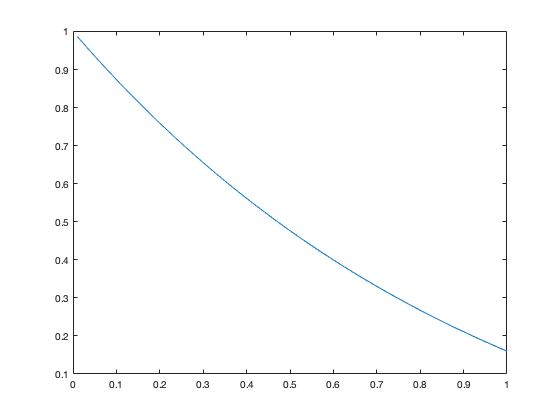}\\
	\caption{In this figure,  we present the inequality (\ref{e1}).  }\label{fig1}
\end{figure}
\end{example}
\subsection{APPLICATIONS OF THE EPI IN TERMS OF GEM}
\indent Assume $\ket{\psi}_{A_1A_2\cdots A_n}$ is a pure state in $\otimes_i\mathcal{H}_i$,  then we consider an entanglement indicator defined as \cite{zhu2014entanglement,yang2022entanglement},
\begin{align}
\delta_{\alpha}^{G}(\ket{\psi})=&\min_{i} \tau_{\alpha}^i(\ket{\psi}),\nonumber\\
\tau_{\alpha}^i(\ket{\psi})=&\sum_{j\ne i}G^{\alpha}(\ket{\psi}_{A_j|\overline{A_j}})-G^{\alpha}(\ket{\psi}_{A_i|\overline{A_i}}).
\end{align} \par
By the Corollary \ref{gepi}, we have that $\delta_{\alpha}^{G}\ge 0,$ $\alpha\in(0,1)$. When considering $\ket{\psi}$ is a three-qubit pure state, we have the following theorem.
\begin{Theorem}
	Assume $\ket{\psi}_{A_1A_2A_3}$ is a three-qubit pure state, then $\delta_{\alpha}^{G}=0,$ $\forall\alpha\in(0,1)$ if and only if $\ket{\psi}_{A_1A_2A_3}$ is a  biseparable state.
\end{Theorem}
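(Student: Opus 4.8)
The plan is to reduce everything to the three single-qubit geometric measures $G_i:=G(\ket{\psi}_{A_i|\overline{A_i}})$, $i=1,2,3$. First I would record that for a qubit $A_i$ the reduced state $\rho_{A_i}$ is a $2\times2$ density matrix with eigenvalues $\lambda_0^{(i)}\ge\lambda_1^{(i)}$, $\lambda_0^{(i)}+\lambda_1^{(i)}=1$, so by (\ref{gm}) one has $G_i=1-\lambda_0^{(i)}=\lambda_1^{(i)}\in[0,1/2]$; in particular $G_i=0$ exactly when $\ket{\psi}$ factorizes across the cut $A_i|\overline{A_i}$. Writing out $\tau_\alpha^1=G_2^\alpha+G_3^\alpha-G_1^\alpha$, $\tau_\alpha^2=G_1^\alpha+G_3^\alpha-G_2^\alpha$, $\tau_\alpha^3=G_1^\alpha+G_2^\alpha-G_3^\alpha$, and using that each is nonnegative by Corollary \ref{gepi}, I note that $\delta_\alpha^G=0$ holds precisely when at least one of the three vanishes, i.e. when $G_1^\alpha,G_2^\alpha,G_3^\alpha$ fail to satisfy all three triangle inequalities strictly.

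For the ``if'' direction I would treat the two kinds of biseparable pure state separately. If $\ket{\psi}$ is fully product then all $G_i=0$ and every $\tau_\alpha^i=0$, so $\delta_\alpha^G=0$ for all $\alpha$. If instead $\ket{\psi}=\ket{\phi}_{A_i}\ket{\zeta}_{\overline{A_i}}$ with $\ket{\zeta}$ an entangled two-qubit state, then $G_i=0$, while the Schmidt decomposition of $\ket{\zeta}$ forces $\rho_{A_j}$ and $\rho_{A_k}$ to share the same spectrum and hence $G_j=G_k>0$. Substituting gives $\tau_\alpha^j=\tau_\alpha^k=0$ (and $\tau_\alpha^i=2G_j^\alpha>0$), so again $\delta_\alpha^G=0$ for every $\alpha\in(0,1)$.

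For the ``only if'' direction I would argue by contraposition: assume $\ket{\psi}$ is genuinely entangled, so all three $G_i>0$. Order them as $G_1\ge G_2\ge G_3>0$ without loss of generality; the only triangle inequality that can fail is $G_2^\alpha+G_3^\alpha>G_1^\alpha$. Letting $\alpha\to0^+$ makes each $G_i^\alpha\to1$ (because $G_i>0$), so $G_2^\alpha+G_3^\alpha-G_1^\alpha\to1>0$; by continuity this difference is strictly positive on some interval $(0,\alpha_0)$, while $\tau_\alpha^2\ge G_3^\alpha>0$ and $\tau_\alpha^3\ge G_2^\alpha>0$ hold for every $\alpha$. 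Hence $\delta_\alpha^G>0$ on $(0,\alpha_0)$, contradicting $\delta_\alpha^G=0$ for all $\alpha$; therefore a state with $\delta_\alpha^G\equiv0$ must be biseparable.

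The points needing the most care are the spectral identity $G_j=G_k$ in the biseparable case, which is exactly the two-qubit Schmidt symmetry, and the limiting behaviour as $\alpha\to0^+$. The latter is the crux, and it explains why the hypothesis quantifies over all $\alpha\in(0,1)$ rather than a single value: a genuinely entangled state may well satisfy $\delta_\alpha^G=0$ for some larger $\alpha$, so only the uniform statement distinguishes genuine entanglement from biseparability. Neither step is a serious obstacle once the problem has been recast as a triangle-inequality question for $G_1^\alpha,G_2^\alpha,G_3^\alpha$.
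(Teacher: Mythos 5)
Your proof is correct, and your ``if'' direction is essentially the paper's, but your ``only if'' direction takes a genuinely different route. The paper argues pointwise with an exponent-doubling trick: if $\tau_\alpha^i=0$ for some $\alpha\in(0,1/2)$, then $G_i^{2\alpha}-G_j^{2\alpha}-G_k^{2\alpha}=(G_j^\alpha+G_k^\alpha)^2-G_j^{2\alpha}-G_k^{2\alpha}=2G_j^\alpha G_k^\alpha\ge 0$, while Corollary~\ref{gepi} applied at exponent $2\alpha\in(0,1)$ forces that same quantity to be $\le 0$; hence $G_j=0$ or $G_k=0$, and the paper then identifies vanishing $G$ across a cut with biseparability by an explicit computation in the Ac\'in generalized Schmidt decomposition. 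You instead argue by contraposition via the limit $\alpha\to 0^+$: if all $G_i>0$, each $G_i^\alpha\to 1$, so the only possibly vanishing quantity $\tau_\alpha^1=G_2^\alpha+G_3^\alpha-G_1^\alpha$ tends to $1$ and is strictly positive for small $\alpha$. Both are valid; yours is more elementary, and you rightly bypass the Schmidt-decomposition computation, since $G=0$ iff $\lambda_0=1$ iff the state is product across that cut, directly from Eq.~(\ref{gm}). What the paper's argument buys is strength: a \emph{single} $\alpha\in(0,1/2)$ with $\delta_\alpha^G=0$ already forces biseparability, whereas your limit argument genuinely consumes the quantifier over arbitrarily small $\alpha$. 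For the same reason, your closing caveat---that a genuinely entangled state ``may well satisfy $\delta_\alpha^G=0$ for some larger $\alpha$''---is in fact false: if $G_2^\alpha+G_3^\alpha=G_1^\alpha$ with all $G_i>0$ and $\alpha\in(0,1)$, then $G_1=(G_2^\alpha+G_3^\alpha)^{1/\alpha}>G_2+G_3$ by strict superadditivity of $t\mapsto t^{1/\alpha}$, contradicting Theorem~\ref{th1}. So the equivalence actually holds for each fixed $\alpha\in(0,1)$ separately; this does not harm your proof of the stated theorem, but the uniform quantification is not what makes it true.
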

\begin{proof}
	$\Rightarrow:$\hspace{4mm}
	When $\alpha\in(0,0.5)$, as $2\alpha\in(0,1),$ we have \begin{align*}
	&G^{2\alpha}(\ket{\psi}_{i|\overline{i}})-G^{2\alpha}(\ket{\psi}_{j|\overline{j}})-G^{2\alpha}(\ket{\psi}_{k|\overline{k}})\nonumber\\
	=&(G^{\alpha}(\ket{\psi}_{j|\overline{j}})+G^{\alpha}(\ket{\psi}_{k|\overline{k}}))^2-G^{2\alpha}(\ket{\psi}_{j|\overline{j}})-G^{2\alpha}(\ket{\psi}_{k|\overline{k}})\nonumber\\
	=&2G^{\alpha}(\ket{\psi}_{j|\overline{j}})G^{\alpha}(\ket{\psi}_{k|\overline{k}})=0.
	\end{align*}
	Here $j\ne k\ne i.$ And
	From the above, we have $G(\ket{\psi}_{j|\overline{j}})=0$ or $G(\ket{\psi}_{k|\overline{k}})=0.$\par
	A three-qubit pure state $\ket{\psi}_{A_1A_2A_3}$ can be written in the generalized Schmidt decomposition \cite{acin2000generalized}:
	\begin{align}
	\ket{\psi}=l_0\ket{000}+l_1e^{i\theta}\ket{100}+l_2\ket{101}+l_3\ket{110}+l_4\ket{111},
	\end{align}
	where $\theta\in[0,\pi)$, $l_i\ge 0$ $(i=0,1,2,3,4),$ and $\sum_{i=0}^4l_i^2=1.$ From simple computation, we have the Schmidt coefficients of $\ket{\psi}$ in terms of $A|BC$ is $$Sch(\ket{\psi}_{A|BC})=(l_0^2,1-l_0^2),$$  the Schmidt coefficients of $\ket{\psi}$ in terms of $B|AC$ is $$Sch(\ket{\psi}_{B|AC})=(\frac{1+\sqrt{1-4\Delta_0}}{2},\frac{1-\sqrt{1-4\Delta_0}}{2}),$$ here $\Delta_0=l_0^2l_3^2+l_0^2l_4^2+l_1^2l_4^2+l_2^2l_3^2-2l_1l_2l_3l_4\cos\theta,$ the Schmidt coefficients of $\ket{\psi}$ in terms of $C|AB$ is 
	\begin{align*}
	Sch(\ket{\psi}_{C|AB})=(\frac{1+\sqrt{1-4\Delta_1}}{2},\frac{1-\sqrt{1-4\Delta_1}}{2}),
	\end{align*}
	here $\Delta_1=l_0^2l_2^2+l_0^2l_4^2+l_1^2l_4^2+l_2^2l_3^2-2l_1l_2l_3l_4\cos\theta.$ \par
	When $G(\ket{\psi}_{A|BC})=0,$ $l_0=0$ or $l_0=1,$ then $\ket{\psi}_{ABC}$ is biseparable. When $G(\ket{\psi}_{B|AC})=0,$ that is, $\Delta_0=0,$ as $\Delta_0=l_0^2(l_3^2+l_4^2)+(l_1l_4-l_2l_3)^2+4l_1l_2l_3l_4\sin^2\frac{\theta}{2}=0,$ then $l_3=l_4=0,$ or $l_0=0,$ then $\ket{\psi}_{ABC}$ is biseparable. The case $G(\ket{\psi}_{C|AB})=0$ is similar to $G(\ket{\psi}_{B|AC})=0$.  \par
	$\Leftarrow:$ \hspace{4mm} Assume $\ket{\psi}=\ket{\phi}_{i}\ket{\varphi}_{jk},$ $i\ne j\ne k,$ then $G(\ket{\psi}_{i|jk})=0,$ $G(\ket{\psi}_{j|ki})=G(\ket{\psi}_{k|ji}),$ then $\delta_{\alpha}^G=0,$ $\forall\alpha\in(0,1).$
\end{proof}
\subsection{EPI IN TERM OF NEGATIVITY AND CONCURRENCE}
\indent In this subsection, we present a class of counterexamples that does not satisfy the EPI in terms of negativity and concurrence, which answers the problem proposed in \cite{yang2022entanglement}. We also propose a class of multipartite pure states satisfying the EPI in terms of negativity.\par First we present an example.
\begin{example}
	\begin{widetext}
	\begin{align}
	\ket{\psi}_{ABC}
	=\frac{\ket{000}+\ket{101}+\ket{202}+\ket{310}+\ket{411}+\ket{512}+\ket{620}+\ket{721}+\ket{822}}{3},\label{r2}
	\end{align}\par
	According to the definition of Negativity (\ref{nd}), we have that $N(\ket{\psi}_{A|BC})=4,$ $N(\ket{\psi}_{B|AC})=1$, $N(\ket{\psi}_{C|AB})=1$. Clearly, $N(\ket{\psi}_{A|BC})\ge N(\ket{\psi}_{B|AC})+N(\ket{\psi}_{C|AB})$, that is, the EPI is invalid for $\ket{\psi}_{ABC}$ in terms of negativity. 
	\end{widetext}
\end{example}
\begin{remark}
	In the above example, we have that $\rho_{BC}=\frac{1}{9}I_3\otimes I_3$, that is, $\ket{\psi}_{ABC}$ is a purification of a product state $\rho_{BC}$.
 \end{remark}\par
Next we present a class of states which doesnot satisfy the EPI in terms of negativity.
\begin{Theorem}
	Assume $\rho$ and $\sigma$ are two states on the system $\mathcal{H}_d$, $\rho=\sum_i a_i\ket{i}\bra{i},$ $\sigma=\sum_j b_j\ket{j}\bra{j}$, $\rank(\rho),\rank(\sigma)>1,$ and $\{\ket{i}\}$ and $\{\ket{j}\}$ are the orthonormal bases of $\mathcal{H}_d.$  Let $\ket{\psi}_{ABC}=\sum_{ij}\sqrt{a_ib_j}\ket{ij}_{AB}\ket{ij}_C$ be a purification of $\rho_A\otimes\sigma_B$, then we have 
	\begin{align}
	N(\ket{\psi}_{C|AB})\ge N(\ket{\psi}_{A|BC})+N(\ket{\psi}_{BAC}).
	\end{align}
\end{Theorem}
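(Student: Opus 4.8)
The plan is to reduce each of the three negativities to a closed form in the spectra $\{a_i\}$ and $\{b_j\}$, after which the claimed inequality becomes an elementary algebraic fact. First I would record the pure-state negativity formula: for a bipartite pure state whose reduced density operator has eigenvalues (Schmidt coefficients) $\{\lambda_k\}$, the definition (\ref{nd}) gives
\begin{align}
N=\frac{1}{2}\Big[\big(\textstyle\sum_k\sqrt{\lambda_k}\big)^2-1\Big].\nonumber
\end{align}
This follows by expressing $\ket{\psi}\bra{\psi}^{T_A}$ in the Schmidt basis: the diagonal blocks contribute eigenvalues $\lambda_k$, while each off-diagonal pair $(k,l)$ contributes a $\pm\sqrt{\lambda_k\lambda_l}$ swap, so that $\norm{\ket{\psi}\bra{\psi}^{T_A}}_1=\sum_{k,l}\sqrt{\lambda_k\lambda_l}=(\sum_k\sqrt{\lambda_k})^2$.

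Next I would compute the relevant spectra. Because $\ket{\psi}_{ABC}$ purifies $\rho_A\otimes\sigma_B$, the reduced state on $A$ is $\rho=\sum_i a_i\proj{i}$, the reduced state on $B$ is $\sigma=\sum_j b_j\proj{j}$, and the reduced state on $AB$ is $\rho\otimes\sigma=\sum_{ij}a_ib_j\ketbra{ij}{ij}$. Hence the cuts $A|BC$, $B|AC$, and $C|AB$ carry Schmidt coefficients $\{a_i\}$, $\{b_j\}$, and $\{a_ib_j\}$ respectively. Writing $x=\sum_i\sqrt{a_i}$ and $y=\sum_j\sqrt{b_j}$, the formula above yields
\begin{align}
N(\ket{\psi}_{A|BC})&=\tfrac12(x^2-1),\quad N(\ket{\psi}_{B|AC})=\tfrac12(y^2-1),\nonumber\\
N(\ket{\psi}_{C|AB})&=\tfrac12\big((\textstyle\sum_{ij}\sqrt{a_ib_j})^2-1\big).\nonumber
\end{align}

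The crucial observation is the factorization $\sum_{ij}\sqrt{a_ib_j}=(\sum_i\sqrt{a_i})(\sum_j\sqrt{b_j})=xy$, coming from the product structure of the purified state, so that $N(\ket{\psi}_{C|AB})=\tfrac12(x^2y^2-1)$. Substituting into the target inequality and clearing the factor $\tfrac12$, it reduces to $x^2y^2-1\ge(x^2-1)+(y^2-1)$, i.e. $(x^2-1)(y^2-1)\ge0$. Finally I would note that $x^2=(\sum_i\sqrt{a_i})^2=1+\sum_{i\ne i'}\sqrt{a_ia_{i'}}\ge1$, and likewise $y^2\ge1$; the hypotheses $\rank(\rho),\rank(\sigma)>1$ force at least two nonzero weights in each spectrum, so both factors are strictly positive and the inequality holds (indeed strictly).

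I expect no serious obstacle once the negativity is put in closed form: the entire argument hinges on the multiplicative structure of $\sum_{ij}\sqrt{a_ib_j}$ induced by the product purification, which converts the additive EPI into the product inequality $(x^2-1)(y^2-1)\ge0$. The only step demanding genuine care is the derivation of the pure-state negativity formula from (\ref{nd})—specifically, identifying the eigenvalues of the partial transpose of a rank-one projector as the $\lambda_k$ together with the signed pairs $\pm\sqrt{\lambda_k\lambda_l}$—after which everything is elementary.
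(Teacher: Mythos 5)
Your proposal is correct and follows essentially the same route as the paper: both express each negativity via the pure-state formula $N=\tfrac12[(\sum_k\sqrt{\lambda_k})^2-1]$, use the product structure of the Schmidt spectra ($\{a_i\}$, $\{b_j\}$, $\{a_ib_j\}$ across the three cuts) to factor $\sum_{ij}\sqrt{a_ib_j}=xy$, and reduce the claim to $(x^2-1)(y^2-1)\ge 0$, which is strictly positive by the rank hypotheses. The only difference is cosmetic: you explicitly derive the negativity formula and the reduced-state spectra, which the paper simply asserts.
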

\begin{proof}
	Assume $\ket{\phi}_{AB}=\sum_i\sqrt{\lambda_i}\ket{ii},$ then 
	\begin{align}
	N(\ket{\phi}_{AB})=&\frac{(\sum_i\sqrt{\lambda_i})^2-1}{2},\label{n}
	\end{align}
	Due to (\ref{n}), we have 
	\begin{align}
		&N(\ket{\psi}_{C|AB})- N(\ket{\psi}_{A|BC})-N(\ket{\psi}_{BAC})\nonumber\\
		=&\frac{1}{2}[(\sum_{ij}\sqrt{a_ib_j})^2-1-(\sum_i\sqrt{a_i})^2+1-(\sum_j \sqrt{b_j})^2+1]\nonumber\\
		=&\frac{1}{2}[(1-(\sum_i\sqrt{a_i})^2)(1-(\sum_j\sqrt{b_j})^2)].\label{nep}
	\end{align}
	As $\sum_ia_i=\sum_jb_j=1,$ and $\rank\rho,\rank\sigma>1,$ then $(\ref{nep})>0.$
	Then we finish the proof.
\end{proof}\par
At last, we present a class of pure states, the generalized W class (GW) states, satisfy the EPI in terms of negativity. These states were first studied on the problem of the monogamy relations in terms of concurrence for higher dimensional systems \cite{san2008generalized}. Recently, the general monogamy relations of this class of states attracts much attention of the revelent researchers \cite{shi2020monogamy,lai2021tighter}. \par
Now let us recall the definition of the GW states $\ket{W_n^d}$,
\begin{align}
\ket{W_n^d}_{A_1\cdots A_n}=\sum_{i=1}^{d}(a_{1i}\ket{i0\cdots 0}+\cdots+a_{ni}\ket{00\cdots i}),\label{p5}
\end{align} 
where we assume $\sum_{i=1}^{d}\sum_{j=1}^{n}|a_{ji}|^2=1.$ Next we present a lemma and then we show the main result of this subsection.\par
\begin{Lemma}\cite{san2008generalized}
	Assume $\ket{\psi}_{AB_1B_2\cdots B_{n-1}}$ is a GW state, then
	for an arbitrary partition $\{P_1,P_2,\cdots,P_m\}$ of the set $S=\{A,B_1B_2,\cdots,B_{n-1}\},$ the state $\ket{\psi}_{P_1P_2\cdots P_m}$ is also a GW state, here we assume $P_i\cap P_j=\emptyset$ $(i\ne j)$ and $\cup_i P_i=S, m\le n$.
\end{Lemma}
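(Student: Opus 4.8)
The plan is to argue that the defining feature of a GW state---being a coherent superposition of single-excitation configurations, in which exactly one party occupies an excited level $\ket{i}$ ($1\le i\le d$) while all the others remain in $\ket{0}$---is preserved when several parties are merged into a single block, and that a suitable relabeling of each block's local basis then recovers the exact form (\ref{p5}) for $m$ composite parties.

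First I would fix a block $P_l$ of the partition and examine how it enters each term of (\ref{p5}). In the term $a_{ji}\ket{0\cdots i\cdots 0}$, where party $j$ carries the excitation $\ket{i}$, the restriction to $P_l$ is the all-zero vector $\ket{0\cdots 0}_{P_l}$ whenever $j\notin P_l$, and is the single-excitation vector $\ket{0\cdots i\cdots 0}_{P_l}$ (only party $j$ inside the block excited) whenever $j\in P_l$. Consequently every term of the merged state has exactly one block in an excited configuration and all remaining blocks in their all-zero state. I would also record the elementary orthonormality facts used below: within a fixed block the all-zero vector and the various single-excitation vectors are distinct computational basis vectors of the composite system, hence mutually orthonormal; and across distinct blocks the factors multiply, so the global terms stay orthonormal.

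Next I would carry out the relabeling. For each block $P_l$ set $\ket{0}_{P_l}:=\ket{0\cdots 0}_{P_l}$ and enumerate its single-excitation vectors $\ket{0\cdots i\cdots 0}_{P_l}$ (over parties $j\in P_l$ and flavors $i$) as new excited levels $\ket{1}_{P_l},\ket{2}_{P_l},\dots$; padding the shorter lists with zero-coefficient dummy levels so that all blocks share a common excitation range, the merged state takes the form $\sum_{i}\sum_{l}a'_{li}\ket{0\cdots i\cdots 0}$ on the $m$ composite parties, which is precisely the shape (\ref{p5}). The new coefficients $a'_{li}$ are merely a reindexing of the original $a_{ji}$, so $\sum_{l,i}|a'_{li}|^2=\sum_{j,i}|a_{ji}|^2=1$ and normalization is inherited.

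The only genuine obstacle is bookkeeping rather than mathematics: one must choose the block-local relabelings consistently and pad the excitation ranges so that every composite party carries the same number of levels, while verifying that the single-excitation vectors really do form an orthonormal family. The conceptual content---that coarse-graining can never produce a term with two simultaneously excited blocks---is immediate from the single-excitation structure of (\ref{p5}), and it is this observation that drives the whole argument.
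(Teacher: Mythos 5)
Your proof is correct, and it matches the standard argument: since every term of the GW state (\ref{p5}) has exactly one excited party, coarse-graining into blocks leaves exactly one excited block per term, and relabeling each block's all-zero vector and orthonormal single-excitation vectors as effective qudit levels (with zero-padding to a common range) recovers the form (\ref{p5}) on the $m$ composite parties. Note that the paper itself states this lemma without proof, citing \cite{san2008generalized}, and your relabeling argument is essentially the proof given in that reference.
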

\begin{Theorem}\label{gw}
Assume $\ket{\psi}_{A_1\cdots A_n}$ is a GW state, and here we denote $\{P_1,P_2,P_3\}$ is a partition of the set $\{A_{1},A_{2},\cdots,A_{n}\},$ $n\ge 3,$ then we have
\begin{align}
N(\ket{\psi}_{P_1|P_2P_3})\le N(\ket{\psi}_{P_2|P_1P_3})+N(\ket{\psi}_{P_3|P_2P_1}).
\end{align}
\end{Theorem}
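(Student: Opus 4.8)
The plan is to reduce the inequality to an elementary statement about the three block‑weights of the state. First I would invoke the preceding Lemma: since $\{P_1,P_2,P_3\}$ is a partition of $\{A_1,\dots,A_n\}$, the state $\ket{\psi}_{P_1P_2P_3}$ is again a GW state, now on three parties. Writing $\ket{0}$ for the distinguished reference vector of each block, I may therefore assume
\[
\ket{\psi}_{P_1P_2P_3}=\sum_i\bigl(b_{1i}\ket{i00}+b_{2i}\ket{0i0}+b_{3i}\ket{00i}\bigr),
\]
and set $x_r=\sum_i|b_{ri}|^2$ for $r=1,2,3$, so that $x_1+x_2+x_3=1$.

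Next I would read off the Schmidt coefficients across each bipartite cut. For the cut $P_1|P_2P_3$ I regroup the state as $\ket{\psi}=\ket{u}_{P_1}\ket{00}_{P_2P_3}+\ket{0}_{P_1}\ket{v}_{P_2P_3}$, where $\ket{u}=\sum_i b_{1i}\ket{i}$ and $\ket{v}=\sum_i(b_{2i}\ket{i0}+b_{3i}\ket{0i})$. Because $\ket{u}\perp\ket{0}$ in $P_1$ and $\ket{00}\perp\ket{v}$ in $P_2P_3$, this expression is (after normalizing $\ket{u}$ and $\ket{v}$) already a Schmidt decomposition, so $\rho_{P_1}$ is rank at most two with eigenvalues exactly $x_1$ and $x_2+x_3=1-x_1$. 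The analogous regroupings for the other two cuts give the eigenvalue pairs $\{x_2,1-x_2\}$ and $\{x_3,1-x_3\}$.

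Then I would substitute into the negativity formula $(\ref{n})$. Using $(\sqrt{x_r}+\sqrt{1-x_r})^2=1+2\sqrt{x_r(1-x_r)}$, each negativity collapses to $N(\ket{\psi}_{P_r|\overline{P_r}})=\sqrt{x_r(1-x_r)}$. Hence the theorem is equivalent to $\sqrt{x_1(1-x_1)}\le\sqrt{x_2(1-x_2)}+\sqrt{x_3(1-x_3)}$, i.e. (writing $1-x_1=x_2+x_3$, etc.)
\[
\sqrt{x_1(x_2+x_3)}\le\sqrt{x_2(x_1+x_3)}+\sqrt{x_3(x_1+x_2)}.
\]
This I verify by squaring both nonnegative sides: the right-hand square minus the left-hand square equals $2x_2x_3+2\sqrt{x_2x_3(x_1+x_3)(x_1+x_2)}\ge0$, which closes the argument.

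The main obstacle is the second step: justifying that every bipartite reduced state of a GW state is effectively rank two and identifying its spectrum as $\{x_r,1-x_r\}$. Once the orthogonality relations $\ket{u}\perp\ket{0}$ and $\ket{00}\perp\ket{v}$ are in hand this is routine, and everything downstream is a one‑line algebraic inequality. I expect no trouble with the degenerate cases, which correspond to some $x_r=0$ and only weaken the inequality to be proved.
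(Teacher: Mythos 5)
Your proposal is correct and follows essentially the same route as the paper: reduce to a three-party GW state via the preceding Lemma, compute each bipartite negativity as $\sqrt{x_r(1-x_r)}$, and verify $\sqrt{x_1(x_2+x_3)}\le\sqrt{x_2(x_1+x_3)}+\sqrt{x_3(x_1+x_2)}$ by squaring. The only difference is that you explicitly justify the spectrum $\{x_r,1-x_r\}$ of each reduced state via the orthogonal regrouping $\ket{u}_{P_r}\ket{0\cdots0}+\ket{0}_{P_r}\ket{v}$, a step the paper compresses into ``through computation.''
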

\begin{proof}
	Due to the above lemma, we have that $\ket{\psi}_{P_1P_2P_3}$ is also a GW state, that is, $\ket{\psi}_{P_1P_2P_3}$ can be written as 
	\begin{align}
	\ket{\psi}_{P_1P_2P_3}=\sum_{i=1}^d(a_{1i}\ket{i00}+a_{2i}\ket{0i0}+a_{3i}\ket{00i}),
	\end{align}
	through computation, we have that 
	\begin{align}
	N(\ket{\psi}_{P_1|P_2P_3})=\sqrt{\sum_i |a_{1i}|^2}\times\sqrt{\sum_i (|a_{2i}|^2+|a_{3i}|^2)},\nonumber\\
	N(\ket{\psi}_{P_2|P_1P_3})=\sqrt{\sum_i |a_{2i}|^2}\times\sqrt{\sum_i (|a_{1i}|^2+|a_{3i}|^2)},\nonumber\\
		N(\ket{\psi}_{P_3|P_1P_3})=\sqrt{\sum_i |a_{3i}|^2}\times\sqrt{\sum_i (|a_{2i}|^2+|a_{1i}|^2)},
	\end{align}
	let $a={\sum_i|a_{1i}|^2},$ $b={\sum_i|a_{2i}|^2},$ $c={\sum_i|a_{3i}|^2}$, as
	\begin{align*}
	\sqrt{a(b+c)}\le& \sqrt{ab+ac+2bc+2\sqrt{bc(a+c)(a+b)}},\nonumber\\
	=&\sqrt{[\sqrt{b(a+c)}+\sqrt{c(a+b)}]^2}.
	\end{align*}
	Then we finish the proof.
\end{proof}\par
Similar to the proof of Theorem \ref{gw} and Corollary \ref{gepi}, we can generalize the Theorem \ref{gw} to the following.
\begin{Corollary}\label{c1}
	Assume $\ket{\psi}_{A_1\cdots A_n}$ is a GW state, and here we denote $\{P_1,P_2,\cdots,P_k\}$ is a partition of the set $\{A_{1},A_{2},\cdots,A_{n}\},$ $n\ge 3,$ 
	then we have the following inequality,
	\begin{align}
	N(\ket{\psi}_{P_i|\overline{P_i}})^{\alpha}\le \sum_{j\ne i}N(\ket{\psi}_{P_j|\overline{P_j}})^{\alpha}, \label{n3}
	\end{align}
	here $\{P_1,P_2,\cdots,P_k\}$ is a partition of the set $\{A_1,A_2,\cdots,A_n\}$ with $k\le n$, and $\alpha\in (0,1]$.
\end{Corollary}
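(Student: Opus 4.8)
The plan is to reduce the statement to the three ingredients already used in Theorem~\ref{gw} and Corollary~\ref{gepi}: closure of the GW class under coarse-graining of the parties, an explicit formula for the negativity of a GW state across an arbitrary bipartition, and the subadditivity of $t\mapsto t^{\alpha}$. First I would invoke the GW closure lemma of \cite{san2008generalized} (stated just before Theorem~\ref{gw}) to replace the $n$-party state by the $k$-party GW state $\ket{\psi}_{P_1\cdots P_k}=\sum_{i=1}^{d}\sum_{l=1}^{k}a_{li}\ket{0\cdots i\cdots 0}$, where the level $i$ occupies slot $P_l$. It then suffices to prove $(\ref{n3})$ for a genuine $k$-party GW state, and the whole problem becomes a statement about the nonnegative numbers $x_l:=\sum_i\abs{a_{li}}^2$, which satisfy $\sum_l x_l=1$.

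Next I would establish the key formula $N(\ket{\psi}_{P_l|\overline{P_l}})=\sqrt{x_l(1-x_l)}$, extending the three-party computation inside the proof of Theorem~\ref{gw}. Writing $\ket{\psi}=\ket{\mu_l}_{P_l}\ket{0}_{\overline{P_l}}+\ket{0}_{P_l}\ket{\nu_l}_{\overline{P_l}}$ with $\ket{\mu_l}=\sum_i a_{li}\ket{i}$ orthogonal to $\ket{0}$ and $\braket{0}{\nu_l}=0$, the reduced state collapses to $\rho_{P_l}=\ket{\mu_l}\bra{\mu_l}+(1-x_l)\proj{0}$, whose only nonzero eigenvalues are $x_l$ and $1-x_l$; formula $(\ref{n})$ then gives $N=\sqrt{x_l(1-x_l)}$. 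With this in hand, the core $\alpha=1$ case is the inequality $\sqrt{x_i(1-x_i)}\le\sum_{j\ne i}\sqrt{x_j(1-x_j)}$. I would prove it cleanly as follows: for each $j\ne i$ we have $1-x_j=\sum_{m\ne j}x_m\ge x_i$, since $i$ is one of the indices $m\ne j$, hence $\sqrt{x_j(1-x_j)}\ge\sqrt{x_i}\,\sqrt{x_j}$; summing over $j\ne i$ and using $\sum_{j\ne i}\sqrt{x_j}\ge\sqrt{\sum_{j\ne i}x_j}=\sqrt{1-x_i}$ (the subadditivity of $\sqrt{\cdot}$) yields exactly $\sum_{j\ne i}\sqrt{x_j(1-x_j)}\ge\sqrt{x_i}\,\sqrt{1-x_i}$.

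Finally I would lift the $\alpha=1$ inequality to all $\alpha\in(0,1]$ exactly as in Corollary~\ref{gepi}, using the bound $(\sum_j t_j)^{\alpha}\le\sum_j t_j^{\alpha}$, the $(k-1)$-term generalization of Lemma~\ref{l1}, valid for nonnegative $t_j$ and $\alpha\in(0,1]$. Setting $t_j=N(\ket{\psi}_{P_j|\overline{P_j}})$ and combining with the monotonicity of $t\mapsto t^{\alpha}$ gives $N(\ket{\psi}_{P_i|\overline{P_i}})^{\alpha}\le(\sum_{j\ne i}t_j)^{\alpha}\le\sum_{j\ne i}t_j^{\alpha}$, which is $(\ref{n3})$. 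I expect the only genuine obstacle to be the passage from the three-part partition of Theorem~\ref{gw} to an arbitrary $k$-part partition: the explicit squaring manipulation used there does not scale to $k$ terms, so the crucial new step is replacing it by the two elementary bounds $\sqrt{1-x_j}\ge\sqrt{x_i}$ and the subadditivity of the square root. Once these are in place, the diagonalization of $\rho_{P_l}$ and the $\alpha$-lifting are routine.
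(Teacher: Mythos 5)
Your proof is correct, and it supplies the details the paper leaves implicit: the paper's entire proof of Corollary \ref{c1} is the single sentence ``Similar to the proof of Theorem \ref{gw} and Corollary \ref{gepi},'' so the skeleton you follow --- GW closure under coarse-graining, the negativity formula $N(\ket{\psi}_{P_l|\overline{P_l}})=\sqrt{x_l(1-x_l)}$ with $x_l=\sum_i|a_{li}|^2$, and the $\alpha$-lifting by subadditivity of $t\mapsto t^{\alpha}$ --- is exactly what the paper intends. Where you genuinely depart is the core scalar inequality for $k$ parts. The paper's Theorem \ref{gw} handles $k=3$ by the squaring manipulation $\sqrt{a(b+c)}\le\sqrt{[\sqrt{b(a+c)}+\sqrt{c(a+b)}]^2}$, which does not extend verbatim beyond three terms; to follow the paper literally one would instead iterate the three-part result by induction, merging parties and using the symmetry $N_{Q|\overline{Q}}=N_{\overline{Q}|Q}$ of pure-state negativity. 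Your replacement --- observing $1-x_j\ge x_i$ for each $j\ne i$, hence $\sqrt{x_j(1-x_j)}\ge\sqrt{x_i x_j}$, then summing and using $\sum_{j\ne i}\sqrt{x_j}\ge\sqrt{1-x_i}$ --- proves the $k$-term inequality in one pass and is shorter and more transparent than either route; it is arguably the cleanest way to close the gap the paper glosses over. Two minor points: in your coarse-graining step the excited levels of a merged party $P_l$ should be allowed to range over more than $d$ values (one block of $d$ levels for each original party inside $P_l$), which changes nothing since only $x_l$ enters the argument; and Lemma \ref{l1} as stated in the paper is a two-term inequality on $(0,1]$, so you are right to invoke the general subadditivity of $t^{\alpha}$ on $[0,\infty)$ rather than the lemma itself.
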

Then we present an example to show the meaning of Theorem $(\ref{gw}).$
\begin{example}
	Here we consider a tripartite pure state,
	\begin{align*}
	\ket{\psi}_{ABCD}=0.3\ket{0001}+0.4\ket{0020}+0.5\ket{0100}+\sqrt{0.5}\ket{1000},
	\end{align*}
	here we take $P_1=\{A\}, P_2=\{BC\},$ $P_3=\{D\},$ then
	\begin{align*}
	N(\ket{\psi}_{P_1|P_2P_3})=0.5, &N(\ket{\psi}_{P_2|P_1P_3})=\sqrt{0.2419},\nonumber\\ N(\ket{\psi}_{P_3|P_1P_2})=&\sqrt{0.0819},
	\end{align*}
	then the inequality $(\ref{n3})$ can be written as
	\begin{align}
	h=&N(\ket{\psi}_{P_2|P_1P_3})^{\alpha}+N(\ket{\psi}_{P_3|P_1P_2})^{\alpha}-N(\ket{\psi}_{P_1|P_2P_3})^{\alpha}\nonumber\\
	=&0.2419^{\frac{\alpha}{2}}+0.0819^{\frac{\alpha}{2}}-0.5^{\alpha},\label{e2}
	\end{align}
	\begin{figure}
		\centering
		\includegraphics[width=90mm]{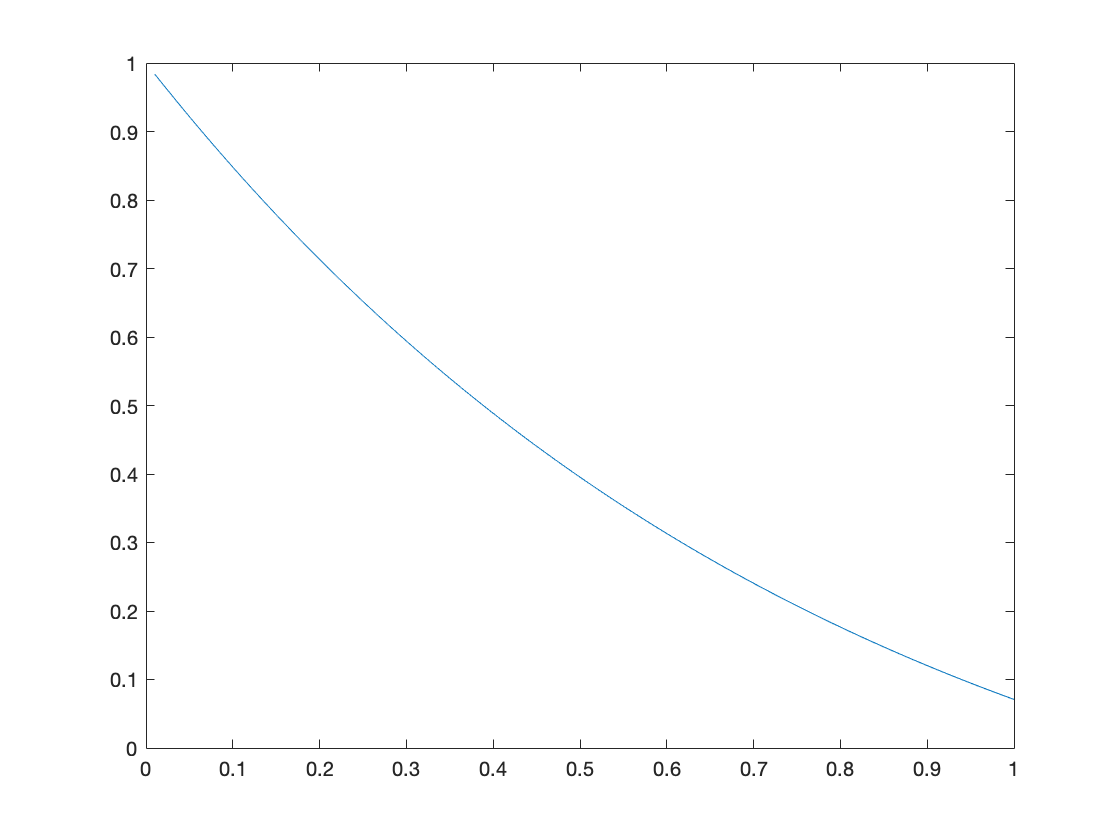}\\
		\caption{In this figure,  we present the inequality (\ref{e2}).  }\label{fig2}
	\end{figure}
	we plot $(\ref{e2})$ in Fig. \ref{fig2}, from the figure, we see that the inequality $(\ref{e1})$ is bigger than 0 when $\alpha\in (0,1).$
\end{example}
\par
At last, we present that the EPI is valid in terms of concurrence for pure states in arbitrary dimensional systems.

 Comparing the $(\ref{qc})$ with $(\ref{c})$, we have when $\ket{\psi}_{AB}$ a bipartite pure state, $C(\ket{\psi}_{AB})=\sqrt{2C_2(\ket{\psi}_{AB})}.$ Then combing the Lemma \ref{l0} and similar proof of Corollary \ref{gepi}, we have the following result,
 \begin{Theorem}
 		For any $n$-qudit pure entangled state $\ket{\psi}$ in $\otimes_i\mathcal{H}_i,$ we have
 	\begin{align}
 	C_{j|\overline{j}}\le\sum_{k\ne j,\forall k}C_{k|\overline{k}}(\ket{\psi}),
 	\end{align}
 \end{Theorem}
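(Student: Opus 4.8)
The plan is to reduce the concurrence inequality to the $q=2$ instance of the $q$-concurrence inequality, which is already available as Lemma \ref{l0}, by exploiting the elementary algebraic relation between the two measures. First I would record that relation: comparing the defining formulas (\ref{c}) and (\ref{qc}), for any bipartite pure state $\ket{\psi}_{AB}$ one has $C_2(\ket{\psi}_{AB})=1-\tr\rho_A^2$ while $C(\ket{\psi}_{AB})=\sqrt{2(1-\tr\rho_A^2)}$, so that $C(\ket{\psi}_{AB})=\sqrt{2\,C_2(\ket{\psi}_{AB})}$, equivalently $C_2=C^2/2$. This identity, applied across each bipartition $A_k|\overline{A_k}$, lets me trade every concurrence for (half of) the square of the corresponding $q$-concurrence at $q=2$.

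Next I would invoke Lemma \ref{l0} with $q=2$, which asserts the one-to-group inequality $C_2^{j|\overline{j}}\le\sum_{k\ne j}C_2^{k|\overline{k}}$ for every $n$-qudit pure state. Substituting $C_2^{k|\overline{k}}=(C_{k|\overline{k}})^2/2$ and clearing the common factor $1/2$ converts this at once into the squared-concurrence inequality
\[
(C_{j|\overline{j}})^2\le\sum_{k\ne j}(C_{k|\overline{k}})^2 .
\]

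Finally I would pass from this squared inequality to the desired linear one exactly as in the proof of Corollary \ref{gepi}, using Lemma \ref{l1} at the exponent $\alpha=1/2$. Setting $y_k=(C_{k|\overline{k}})^2$, the square-root subadditivity $\sqrt{y_j}\le\sqrt{\sum_{k\ne j}y_k}\le\sum_{k\ne j}\sqrt{y_k}$ gives $C_{j|\overline{j}}\le\sum_{k\ne j}C_{k|\overline{k}}$. The one point deserving care is that Lemma \ref{l1} is stated for two summands and for arguments in $(0,1]$, whereas here there are $n-1$ summands and each $(C_{k|\overline{k}})^2$ may exceed $1$ in higher-dimensional systems. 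I would dispose of this by observing that the underlying inequality $\sqrt{x}+\sqrt{y}\ge\sqrt{x+y}$ holds for all nonnegative reals and extends to any finite number of terms by a routine induction, so the domain restriction is inessential to the argument. This induction is the only mildly nontrivial ingredient; the remainder is a direct substitution into Lemma \ref{l0}.
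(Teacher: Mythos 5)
Your proposal is correct and follows essentially the same route as the paper: the identity $C=\sqrt{2C_2}$ relating (\ref{c}) and (\ref{qc}), Lemma \ref{l0} at $q=2$, and then the $\alpha=1/2$ power subadditivity argument of Lemma \ref{l1} and Corollary \ref{gepi}. Your added observation that $\sqrt{x}+\sqrt{y}\ge\sqrt{x+y}$ holds for all nonnegative reals and any number of summands (so the two-term, $(0,1]$-domain form of Lemma \ref{l1} is not a real obstruction) is a minor tightening of a step the paper leaves implicit.
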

\section{CONCLUSION}
\indent In this article, we have investigated the EPI for pure states in multipartite systems. First we present the EPI is valid in terms of GEM for pure states in arbitrary dimensional systems. Based on the inequalities, we have presented the entanglement indicators in terms of GEM. At last, we have presented a class of pure states that does not satisfy the EPI in terms of negativity, we have also presented a class of pure states satisfying the EPI in terms of negativity. Moreover, we have presented the EPI is valid in terms of concurrence for pure states in arbitrary dimensional systems. Due to the importance of the study of the higher-dimensional multipartite entanglement systems, our results can provide a reference for future work on the study of multiparty quantum entanglement.
\bibliographystyle{IEEEtran}
\bibliography{ref}
\end{document}